% ****** Start of file apssamp.tex ******
%
%   This file is part of the APS files in the REVTeX 4.1 distribution.
%   Version 4.1r of REVTeX, August 2010
%
%   Copyright (c) 2009, 2010 The American Physical Society.
%
%   See the REVTeX 4 README file for restrictions and more information.
%
% TeX'ing this file requires that you have AMS-LaTeX 2.0 installed
% as well as the rest of the prerequisites for REVTeX 4.1
%
% See the REVTeX 4 README file
% It also requires running BibTeX. The commands are as follows:
%
%  1)  latex apssamp.tex
%  2)  bibtex apssamp
%  3)  latex apssamp.tex
%  4)  latex apssamp.tex

\documentclass[aps,pra,showpacs,twocolumn,superscriptaddress]{revtex4-1}
\usepackage{amsmath}
\usepackage{amsfonts}
\usepackage{amssymb}
\usepackage{graphicx}
\usepackage{dcolumn}
\usepackage{bm}
\usepackage{amsthm}
\usepackage[caption=false]{subfig}
\usepackage{natbib}

\begin{document}
\title{Fidelity of remote state preparation can be enhanced by local operation}

\author{Congyi Hua}
 \email{huacongyi@gmail.com}
 \affiliation{Zhejiang Institute of Modern Physics, Zhejiang University, Hangzhou 310027, China}
\author{Sheng-Wen Li}
 \email{lishengwen@yeah.net}
 \affiliation{Beijing Computational Science Research Center, Beijing 100084, China}
\author{Yi-Xin Chen}
 \email{Corresponding author\\ yxchen@zimp.zju.edu.cn}
 \affiliation{Zhejiang Institute of Modern Physics, Zhejiang University, Hangzhou 310027, China}

\begin{abstract}
Remote state preparation (RSP) is a quantum information protocol which allows preparing a quantum state at a distant location with the help of a preshared nonclassical resource state and a classical channel. The efficiency of successfully doing this task can be represented by the RSP-fidelity of the resource state. In this paper, we study the influence on the RSP-fidelity by applying certain local operations on the resource state. We prove that RSP-fidelity does not increase for any unital local operation. However, for nonunital local operation, such as local amplitude damping channel, we find that some resource states can be enhanced to increase the RSP-fidelity. We give the optimal parameter of symmetric local amplitude damping channel for enhancing Bell-diagonal resource states. In addition, we show RSP-fidelity can suddenly change or even vanish at instant under local decoherence.
\end{abstract}
\pacs{03.65.Yz, 03.67.-a, 03.65.Ta, 85.25.Cp}
\maketitle

\section{Introduction}
A quantum state of a qubit can be remotely prepared with the help of a preshared nonclassical bipartite resource state and a classical communication channel. This is called remote state preparation (RSP)~\cite{lo_classical-communication_2000,pati_minimum_2000}. The basic idea lies as follows~\cite{pati_minimum_2000}: Alice and Bob share a bipartite nonclassical state, for example, a maximally entangled qubit (ebit) in the ideal case. After a local measurement along certain direction at Alice's side, Alice sends Bob one classical bit (cbit), which tells Bob whether or not to flip his qubit, in order to obtain the state as desired. Several experiments for this scheme have been carried out, e.g., by nuclear magnetic resonance (NMR) techniques~\cite{peng_experimental_2003}.

In the ideal case, when the resource state is a maximally entangled state, the desired target state can be prepared perfectly at Bob's side, i.e., with fidelity equal to one. However, in practice, the resource state for RSP is usually a mixed one, thus we cannot always complete a perfect preparation. The efficiency of successfully doing this task can be measured by RSP-fidelity~\cite{dakic_quantum_2012}. Since RSP cannot be carried out without a nonclassical resource state, RSP-fidelity to some extent  describes the ``quantumness" of the bipartite resource state, which is similar with quantum discord~\cite{ollivier_quantum_2001}.

Recently, a surprising discovery has been made, which connects the RSP-fidelity with the geometric measure of quantum discord (GMQD)~\cite{dakic_quantum_2012}. They find that the necessary condition for a resource state to afford non-zero fidelity of RSP is that it must have nonzero quantum discord, rather than nonzero entanglement. Furthermore, they explicitly show that for a broad class of states, the RSP-fidelity is equal to the GMQD. Hence their work links GMQD to an operational meaning.

Some works point out that quantum discord can in-
crease via local operations~\cite{dakic_necessary_2010,streltsov_behavior_2011,hu_necessary_2012,piani_problem_2012}. For instance, a state with zero discord $\rho=(\vert 00\rangle \langle 00\vert+\vert11\rangle \langle 11\vert)/2$ can be transformed into a state with non-zero discord $\rho^\prime=(\vert 00\rangle \langle 00\vert+\vert+1\rangle \langle +1\vert)/2$ by a local operation acting on the first qubit such that $\Phi(\vert 0\rangle\langle 0\vert)=\vert 0\rangle\langle 0\vert$, $\Phi(\vert 1\rangle\langle 1\vert)=\vert +\rangle\langle +\vert$, with $\vert+\rangle =(|0\rangle +|1\rangle )/\sqrt{2}$. It can be verified that $\Phi$ is a trace-preserving and completely positive map. This fact makes us conjecture that RSP-fidelity, which is accessible in experiment, may share the similar property of local increase.

In this paper, we study the influence of local operations on the RSP-fidelity. We prove that the RSP-fidelity does not increase for unital local operation. However, for nonunital local operation, such as local amplitude damping channel, we find that some resource states can be enhanced to increase the RSP-fidelity. We give the optimal parameter of symmetric local amplitude damping channel for enhancing Bell-diagonal resource states. Meanwhile, we observe some interesting dynamic behaviors of RSP-fidelity. RSP-fidelity under local decoherence has the effect of sudden change and can vanish at instant. Particularly, the behavior of vanishing at instant indicates RSP-fidelity is more fragile than discord against decoherence, and therefore should take more enhancement.

We organize the remains of this paper as follows. In Sec.~\ref{secii}, we briefly review the RSP Protocol and the connection between RSP-fidelity and GMQD. In Sec.~\ref{seciii}, we study the influence of two types of local operations, the unital and the nonunital ones, on the RSP-fidelity. A necessary condition for local increase of RSP-fidelity is obtained. We also give a criterion of enhancibility for Bell-diagonal states under a symmetric local amplitude damping channel, as well as the optimal parameter of the channel. Finally, in Sec.~\ref{seciv}, we summarize our results.

\section{RSP}\label{secii}
In this section, we first introduce RSP based on the protocol in Ref.~\cite{pati_minimum_2000}. Then we make a quantitative comparison between RSP-fidelity and GMQD.
\subsection{RSP Protocal}
The concrete RSP protocol that our paper deals with comes from Pati~\cite{pati_minimum_2000}, where Alice intends to remotely prepare a pure qubit state from an ensemble represented by a great circle on the Bloch sphere using only one cbit and one preshared nonclassical bipartite resource state. The basic idea (also described by Daki{\'c} \textit{et al.}~\cite{dakic_quantum_2012}) is as follows: The target qubit, which is denoted as $\vert \phi\rangle$ or a Bloch vector $\boldsymbol{s}$, is chosen from the great circle of the Bloch ball orthogonal to a unit vector $\boldsymbol{b}$. A nonclassical state is preshared as resource state by both parties. Since in RSP Alice knows exactly what the target state is, she can choose such a unit vector $\boldsymbol{\alpha}$ to perform a local measurement along that would help to prepare the target state at Bob's side with the highest achievable fidelity~\cite{jozsa_fidelity_1994}. For instance, in the case when Alice and Bob have an singlet state $|\Psi ^-\rangle =\frac{1}{\sqrt{2}}\left(\vert\phi \rangle\vert\phi ^{\bot }\rangle -\vert\phi ^{\bot }\rangle \vert\phi \rangle \right)$ as resource state, Alice can perform a measurement along the direction $\boldsymbol{\alpha }=-\boldsymbol{s}$ and then send Bob the measurement outcome $\alpha =\pm 1$ as one cbit information. For $\alpha =-1$, Bob applies a $\boldsymbol{\pi}$ rotation about $\boldsymbol{\beta}$, and for $\alpha =1$, Bob does nothing. After this, the resulting state $\boldsymbol{r}$ on Bob's side will be nothing but the target state $\boldsymbol{s}$.

There is nothing special about sharing a singlet state as resource state. One can use any other maximally entangled states to achieve the same goal up to a change of measurement direction at Alice's side. In general, when the resource state is not maximally entangled, the resulting state $\boldsymbol{r}$ may differ from the target state $\boldsymbol{s}$. In order to evaluate the efficiency of RSP, Daki{\'c} \textit{et al.}~\cite{dakic_quantum_2012} introduce the notion of payoff-function $P\equiv (\boldsymbol{r}\cdot \boldsymbol{s})^2$. In each run, by deliberately choosing the measurement direction, Alice can optimize the payoff-function. We denote the optimized payoff-function by $P_{\text{opt}}$. The RSP-fidelity averages the optimized payoff-function and is minimized over all $\boldsymbol{\beta}$ on Bloch sphere, i.e., $F_{\text{RSP}}:=\underset{\boldsymbol{\beta }}{\inf }\left\langle P_{\text{opt}}\right\rangle.$ The RSP-fidelity captures two most natural aspects, $F_{\text{RSP}}=1$ ($P_{\text{opt}}\equiv 1$), for the resource state maximally entangled, while $F_{\text{RSP}}=0$ ($P_{\text{opt}}\equiv 0$) for the maximally mixed.

By representing a two-qubit resource state $\rho$ in terms of Pauli matrices $\{\sigma _1,\sigma _2,\sigma _3\}$
\begin{equation*}
\rho =\frac{1}{4}\left[I\otimes I+\boldsymbol{a}\cdot \boldsymbol{\sigma }\otimes I+I\otimes \boldsymbol{b}\cdot \boldsymbol{\sigma }+\sum _{k, l=1}^3 E_{k l}\sigma _k\otimes \sigma _l\right],
\label{eqbloch}
\end{equation*}
where $\boldsymbol{a}$, $\boldsymbol{b}$ are local Bloch vectors, the RSP-fidelity can be written as
\begin{equation}
F_{\text{RSP}}(\rho )=\frac{1}{2}\left(E_2^2+E_3^2\right).
\label{eqf}
\end{equation}
Here the coefficients $E_{k l}=\operatorname{tr}\left(\rho \sigma _k\otimes \sigma _l\right)$ form a real matrix denoted by $E$, and $E_1^2\geq E_2^2\geq E_3^2$ are the eigenvalues of $E^T E$. For simplicity, we will skip the derivation of this expression, one can consult Ref.~\cite{dakic_quantum_2012} for details. For RSP with only classical resource, where $E_2=E_3=0$, Non-zero RSP-fidelity is impossible. Since RSP-fidelity is a function of state, we can compare it with GMQD and this is what we are going to do in the next subsection.

\subsection{RSP-fidelity and GMQD}\label{seciib}
GMQD is introduced by the motivation for capturing total quantum correlation~\cite{dakic_necessary_2010}. The original paper defines GMQD as the square of the Hilbert-Schmidt distance from the given state to the set of classical-quantum states, i.e., $\underset{\chi }{\inf } \|\rho - \chi \|_{\text{HS}}^2,$
where $\|\rho - \chi \|_{\text{HS}}:=[\operatorname{tr} (\rho - \chi )^\dagger(\rho - \chi )]^{1/2}$, the infimum is taken over the set of classical-quantum states $\chi$. In order to make fair comparison with RSP-fidelity, we normalize the GMQD by a factor 2, as in Ref.~\cite{dakic_quantum_2012}, and deal with the normalized GMQD ($D_\text{G}$),
\begin{equation*}
D_\text{G}(\rho ):=2\, \underset{\chi }{\inf } \|\rho - \chi \|_{\text{HS}}^2.
\end{equation*}
Following Ref.~\cite{dakic_necessary_2010}, we have
\begin{align}
 D_\text{G}(\rho )&=\frac{1}{2}\left(|\boldsymbol{a}|^2+\|E\|_{\text{HS}}^2-\lambda _{\max }\right) \notag\\
&=\frac{1}{2}\left(\operatorname{tr} \boldsymbol{a} \boldsymbol{a}^{\text{T}}+\operatorname{tr} E E^{\text{T}}-\lambda _{\max }\right),
\label{eqdg}
\end{align}
with $\lambda _{\max}$ being the largest eigenvalue of $\boldsymbol{a} \boldsymbol{a}^{\text{T}}+E E^{\text{T}}$. In this paper, we also treat vectors as column matrices, like $\boldsymbol{a}$ in Eq.~(\ref{eqdg}).

Based on Eq.~(\ref{eqdg}), we can see for a broad set of states the (normalized) GMQD matches the RSP-fidelity. It is obvious that $\lambda _{\max }\leq \operatorname{tr} \boldsymbol{a} \boldsymbol{a}^{\text{T}}+E_1^2$, with equality if and only if $\boldsymbol{a}$ is parallel to the eigenvector corresponding to largest eigenvalue of $E E^{\text{T}}$. This set is a strict subset of the set of all X states~\cite{yu_evolution_2007} and is big enough to contain all the maximally mixed marginal states~\cite{horodecki_information-theoretic_1996}. Generally, 
\begin{equation*}
D_\text{G}(\rho)\geq F_{\text{RSP}}(\rho).
\end{equation*}
$D_\text{G}=0$ if and only if $\boldsymbol{a}=\boldsymbol{0}$ and $E_2^2=E_3^2=0$, so $D_\text{G}>0$ is a necessary but not sufficient condition for $F_{\text{RSP}}>0$. Fig.~\ref{fig1} shows a concrete example of $D_\text{G}>0$ but $F_{\text{RSP}}=0$. However, $F_{\text{RSP}}(\rho )$ and $D_\text{G}(\rho )$ simultaneously reaches $1$ when $\rho$ is maximally entangled. This is because, with respect to GMQD, the set of maximally discordant states is equivalent to that of maximally entangled states.

In addition, both $F_{\text{RSP}}(\rho)$ and $D_\text{G}(\rho)$ are invariant under local unitary transformation, i.e., for any unitary matrices $U_{1,2}$, $F_{\text{RSP}}\left(U_1\otimes U_2\rho U_1^\dagger\otimes U_2^\dagger\right)=F_{\text{RSP}}(\rho )$, $D_\text{G}\left(U_1\otimes U_2\rho U_1^\dagger\otimes U_2^\dagger\right)=D_\text{G}(\rho )$. Mathematically, performing a local unitary transformation is nothing but rechoosing a new local basis to represent the same density operator $\rho$, which obviously won't change $F_{\text{RSP}}(\rho )$ and $D_\text{G}(\rho )$.

\section{RSP under local operation}\label{seciii}
In this section, we study the behaviors of RSP-fidelity under local operation. For this purpose, we separately discuss the characteristics of the resource state under action of two types of quantum operations, the
unital and the nonunital ones. At last, we give a criterion for testing the enhancibility of a Bell-diagonal state under symmetric local amplitude damping channel and the optimal parameter of such channel.
\subsection{Unital operation}\label{seciiia}
Quantum operations or, from the viewpoint of decoherence, quantum channels are trace-preserving and completely positive maps, which are used for describing the dynamic changes to a state. A quantum operation $\Phi$ is said to be unital, if and only if $\Phi (I)=I$, or else it is nonunital. After introducing a representation of quantum operation, we will give the proposition that RSP-fidelity does not increase for any unital local operation.

Usually, a quantum operation is represented by Kraus operators. Here we deal with another representation for the convenience of the following proof~\cite{king_minimal_2001}. Recall that any qubit $\rho$ can be represented by a Bloch vector $\boldsymbol{r}$, so that $\rho =\frac{1}{2}\left(I+\boldsymbol{r}\cdot \boldsymbol{\sigma}\right)$, where $\boldsymbol{\sigma}$ is the vector of Pauli matrices. Accordingly, any single qubit operation $\Phi$, which maps a density matrix to another density matrix, can be represented by a unique $4\times 4$ matrix $\mathcal{T}$, which maps a Bloch vector to another one,
\begin{equation*}
\Phi:\frac{1}{2}\left(I,\boldsymbol{\sigma}^{\text{T}}\right)\left(\begin{array}{c}1 \\\boldsymbol{r} \\\end{array}\right)\mapsto\frac{1}{2}\left(I,\boldsymbol{\sigma}^{\text{T}}\right)\mathcal{T}\left(\begin{array}{c}1 \\\boldsymbol{r}\\\end{array}\right).
\end{equation*}
Here $\left(\begin{array}{c}1 \\\boldsymbol{r}\\\end{array}\right)$ is a $4\times 1$ matrix and $(I,\boldsymbol{\sigma}^{\text{T}})\equiv(I, \sigma_1, \sigma_2, \sigma_3)$, $\mathcal{T}$ has the form
\begin{align*}
\mathcal{T}=\left(\begin{array}{cc} 1 & \boldsymbol{0}^{\text{T}} \\ 
\boldsymbol{t} & T \\
\end{array}
\right),
\end{align*}
where $\boldsymbol{0}$ is zero vector, $\boldsymbol{t}$ and $3\times 3$ matrix $T$ are real. For a unital operation $\Phi$, we have $\boldsymbol{t}=\boldsymbol{0}$. Using the singular value decomposition, we can write
\begin{equation*}
T=O_1D O_2^{\text{T}}=R_1(\pm D) R_2^{\text{T}},
\end{equation*}
where $O_{1,2}$ are orthogonal matrices, $D=\text{diag}\{D_{11}, D_{22}, D_{33}\}$ and $R_{1,2}$ are rotations. A common convention is to list $\{D_{11}, D_{22}, D_{33}\}$ in descending order. In this case, the diagonal matrix $D$ is unique. Define the map $\Phi _D$ by
\begin{equation}
 \Phi _D\left(\rho\right):=\frac{1}{2}\left(I,\boldsymbol{\sigma}^{\text{T}}\right)\left(
\begin{array}{cc}
 1 & \boldsymbol{0}^{\text{T}} \\
 \boldsymbol{d} & \pm D \\
\end{array}
\right)\left(\begin{array}{c}1 \\\boldsymbol{r}\\\end{array}\right),
\label{eqphid}
\end{equation}
with $\boldsymbol{d}=\left(d_1,d_2,d_3\right)^{\text{T}}=R_2 R_1^{\text{T}}\boldsymbol{t}$.
Since every rotation on a Bloch vector is equivalent to a unitary transformation on the respective density operator, the general quantum operation $\Phi$ can be factorized into simpler parts as
\begin{equation}
\Phi (\rho )=U \Phi _D(V \rho  V^\dagger)U^\dagger,
\label{eqdecomp}
\end{equation}
where $U$ and $V$ are unitary matrices corresponding to $R_1$ and $R_2$.

Now we come to the following proposition.
\newtheorem*{prop}{Proposition}
\begin{prop}
RSP-fidelity of two-qubit states do not increase under local unital operations.
\end{prop}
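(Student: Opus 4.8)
The plan is to combine the closed form (\ref{eqf}) with two structural facts already in hand: the local-unitary invariance of $F_{\text{RSP}}$ and the canonical factorization (\ref{eqdecomp}). Suppose the unital operation acts on the first qubit; the case of the second qubit is identical by symmetry, and a simultaneous local operation follows by composing the two single-sided steps. Writing $\Phi=U\,\Phi_D(V\cdot V^\dagger)\,U^\dagger$ and noting that $U,V$ implement local unitaries under which $F_{\text{RSP}}$ is invariant, it suffices to bound $F_{\text{RSP}}$ under the reduced map $\Phi_D$ applied to $\tilde\rho=(V\otimes I)\rho(V^\dagger\otimes I)$. Since $\Phi$ is unital we have $\boldsymbol{t}=\boldsymbol{0}$ and hence $\boldsymbol{d}=R_2R_1^{\mathrm T}\boldsymbol{t}=\boldsymbol{0}$, so by (\ref{eqphid}) the map $\Phi_D$ merely rescales the Bloch vector by the diagonal matrix $\pm D$.

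Next I would track how the correlation matrix transforms. Collecting the coefficients $\operatorname{tr}(\rho\,\sigma_\mu\otimes\sigma_\nu)$ (with $\sigma_0=I$) into a $4\times4$ array, a local operation on the first qubit acts by left multiplication by its $\mathcal{T}$-matrix. For $\Phi_D$ in the unital case this immediately gives that the correlation block transforms as $E'=\pm D\,\tilde E$, where $\tilde E$ is the correlation matrix of $\tilde\rho$. Because $V$ corresponds to a rotation $R_V$ with $\tilde E=R_V E$, the matrix $\tilde E^{\mathrm T}\tilde E=E^{\mathrm T}E$ has the same spectrum as $E^{\mathrm T}E$, so the whole question reduces to comparing the spectra of $E'^{\mathrm T}E'=\tilde E^{\mathrm T}D^2\tilde E$ and $\tilde E^{\mathrm T}\tilde E$.

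The crux is that a unital qubit channel maps the Bloch ball into itself, so $T$ is a contraction, each singular value satisfies $|D_{ii}|\le1$, and therefore $D^2\le I$ in the Loewner order. Conjugating by $\tilde E$ yields $E'^{\mathrm T}E'=\tilde E^{\mathrm T}D^2\tilde E\le\tilde E^{\mathrm T}\tilde E$. By Weyl's monotonicity theorem the decreasingly ordered eigenvalues obey $\lambda_i(E'^{\mathrm T}E')\le\lambda_i(\tilde E^{\mathrm T}\tilde E)$ for every $i$; applied to the two smallest eigenvalues this gives $E_2'^2+E_3'^2\le \tilde E_2^2+\tilde E_3^2=E_2^2+E_3^2$. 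Substituting into (\ref{eqf}) then yields $F_{\text{RSP}}(\Phi(\rho))\le F_{\text{RSP}}(\rho)$, as claimed.

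I expect the main obstacle to be the linear-algebra step rather than the physical setup: one must ensure that the eigenvalues of $E^{\mathrm T}E$ can only shrink, and that this happens eigenvalue-by-eigenvalue (not merely in the trace), so that it survives after discarding the largest eigenvalue and retaining precisely the two smallest that enter $F_{\text{RSP}}$. Weyl monotonicity delivers exactly this, but it rests on first establishing the Loewner bound $D^2\le I$, i.e. the contractivity of the unital channel. If instead $\boldsymbol{t}\neq\boldsymbol{0}$, the correlation matrix picks up an additional rank-one term $\boldsymbol{t}\,\boldsymbol{b}^{\mathrm T}$ and this monotone comparison fails, which is precisely the mechanism that allows the nonunital case to enhance $F_{\text{RSP}}$.
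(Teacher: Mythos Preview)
Your argument is correct. Both you and the paper begin by invoking local-unitary invariance together with the factorization (\ref{eqdecomp}) to reduce to the diagonal map $\Phi_D$, and both use that unitality forces $\boldsymbol{d}=\boldsymbol{0}$ so that the correlation matrix transforms multiplicatively. From there the two arguments diverge. The paper treats a two-sided operation at once, so that $E\mapsto D_1 E D_2$, rewrites $F_{\text{RSP}}$ as the infimum in Eq.~(\ref{eqproof}), evaluates the corresponding infimum for $\Phi_D(\rho)$ at the particular rotation $R_0$ optimal for $\rho$, and bounds the resulting sum entrywise using $(D_1)_{ii}^2,(D_2)_{jj}^2\le 1$ from the King--Ruskai tetrahedron. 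You instead handle one side at a time and compose, and replace the variational step by a Loewner-order argument: from $D^2\le I$ you get $E'^{\mathrm T}E'=\tilde E^{\mathrm T}D^2\tilde E\le\tilde E^{\mathrm T}\tilde E$, and Weyl monotonicity then gives the eigenvalue-by-eigenvalue inequality needed for (\ref{eqf}). This is a cleaner piece of linear algebra and makes transparent exactly why the bound survives after discarding the largest eigenvalue; the paper's approach, by contrast, keeps everything at the level of explicit matrix entries and the tetrahedron constraint. One small point you glossed over: for the second-qubit case the correlation matrix transforms as $E'=\tilde E(\pm D)$, so the Loewner bound is most naturally written for $E'E'^{\mathrm T}$ rather than $E'^{\mathrm T}E'$; this is harmless since both share the same spectrum, but it is worth stating explicitly.
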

\begin{proof}
According to Eq.~(\ref{eqdecomp}), an arbitrary local unital operation $\Phi$ on a two-qubit state $\rho$ can be given as
\begin{equation*}
\Phi(\rho )=\left(U_1\otimes U_2\right)\Phi _D\left(V_1\otimes V_2\rho  V_1^\dagger\otimes V_2^\dagger\right)\left(U_1^\dagger\otimes U_2^\dagger\right),
\end{equation*}
where $U_{1, 2}$, $V_{1, 2}$ are unitary and $\Phi _D$ is represented by the matrix
\begin{align*}
\notag
 \left(
\begin{array}{cc}
 1 & \boldsymbol{0}^{\text{T}} \\
 \boldsymbol{0} & \pm D_1 \\
\end{array}
\right)\otimes\left(
\begin{array}{cc}
 1 & \boldsymbol{0}^{\text{T}} \\
 \boldsymbol{0} & \pm D_2 \\
\end{array}
\right).
\end{align*}
Here we have used the fact that, for a unital operation, $\boldsymbol{d}=\boldsymbol{0}$. The trace-preserving and completely positive property of $\Phi$ requires both the points represented by $\left(\left(D_1\right)_{11},\left(D_1\right)_{22},\left(D_1\right)_{33}\right)$, $\left(\left(D_2\right)_{11},\left(D_2\right)_{22},\left(D_2\right)_{33}\right)$ are constrained inside a tetrahedron with corners $(1,1,1)$, $(1,-1,-1)$, $(-1,1,-1)$ and $(-1,-1,1)$~\cite{king_minimal_2001}.
Since $F_{\text{RSP}}$ is invariant under local unitary transformation (see Sec.~\ref{seciib}), we can restrict our considerations to those local unital operations which have $\Phi _D$-form as in Eq.~(\ref{eqphid}). For any two-qubit state $\rho$, a simple calculation shows
\begin{align*}
\Phi _D(\rho)=&\frac{1}{4}\left\{\vphantom{\sum _{i,j=1}^3}I\otimes I+\left(D_1 \boldsymbol{a}\right)\cdot \boldsymbol{\sigma }\otimes I+I\otimes\left(D_2\boldsymbol{b}\right)\cdot \boldsymbol{\sigma }\notag\right.\\
&\left.+\sum _{i,j=1}^3 \left(D_1E D_2\right)_{i j}\sigma _i\otimes \sigma _j\right\},
\end{align*}
with $\left(D_1E D_2\right)_{i j}\equiv\left(D_1\right)_{i i}\left(D_2\right)_{j j}E_{i j}.$

From Eq.~(\ref{eqf}), we have
\begin{align}
 F_{\text{RSP}}(\rho )={}&\frac{1}{2}\left[\operatorname{tr}(E^{\text{T}}E)-\underset{R}{\sup }(\boldsymbol{k}^{\text{T}}R^{\text{T}}E^{\text{T}}E R \boldsymbol{k})\right]\notag \\
\begin{split}
={}&\frac{1}{2}\underset{R}{\inf}\left\{\operatorname{tr}\left[\left(R^{\text{T}}E R\right)^{\text{T}}R^{\text{T}}E R\right]\right.\\
&\left.-\boldsymbol{k}^{\text{T}}\left(R^{\text{T}}E R\right)^{\text{T}}R^{\text{T}}E R \boldsymbol{k}\right\},
\end{split}
\label{eqproof}
\end{align}
and similarly, have
\begin{align*}
F_{\text{RSP}}\left(\Phi _D(\rho )\right)
={}&\frac{1}{2}\underset{R}{\inf }\left\{\operatorname{tr}\left[\left(R^{\text{T}}D_1E D_2 R\right)^{\text{T}}R^{\text{T}}D_1E D_2 R\right]\right.\\
&\left.-\boldsymbol{k}^{\text{T}}\left(R^{\text{T}}D_1E D_2 R\right)^{\text{T}}R^{\text{T}}D_1E D_2 R \boldsymbol{k}\right\}
\end{align*}
where $\boldsymbol{k}=(1,0,0)^{\text{T}}$, the supremum and infimum are taken over the set of all rotation matrices.
We assume the infimum in Eq.~(\ref{eqproof}) is achieved for $R=R_0$ and define $E_0\text{:=}R_0^{\text{T}}E R_0$. Then we can see
\begin{align*}
&F_{\text{RSP}}\left(\Phi _D(\rho )\right)\\
\leq{}& \frac{1}{2}\left\{\operatorname{tr}\left[\left(R_0^{\text{T}}D_1E D_2 R_0\right)^{\text{T}}R_0^{\text{T}}D_1E D_2 R_0\right]\right.\\
&\left.-\boldsymbol{k}^{\text{T}}\left(R_0^{\text{T}}D_1E D_2 R_0\right)^{\text{T}}R_0^{\text{T}}D_1E D_2 R_0 \boldsymbol{k}\right\} \\
={}&\sum _{i=1}^3 \sum _{j=1}^2 \frac{1}{2}\left(D_1\right)_{i i}^2\left(D_2\right)_{j j}^2\left(E_0\right)_{i j}^2 \\
\leq{}&F_{\text{RSP}}(\rho )
\end{align*}
We find that, for a unital operation $\Phi$, $F(\Phi (\rho ))=F(\rho)$ if and only if $\Phi_D$ is trivial identity.
\end{proof}
The above proposition excludes the possibility of enhancing the RSP resource with unital local operations, such as the direct product of single qubit depolarizing, bit flip, phase flip and bit-phase flip channels.

\subsection{Nonunital operation}\label{seciiib}
In this subsection, we transfer our attention to symmetric local amplitude damping channel to see if a suitable local nonunital operation can bring some enhancement to the resource states.

Amplitude damping channel is a typical kind of nonunital operation that characterizes the behavior of energy dissipation in a lot of concrete physical processes, such as spontaneous emission of an atom, behavior of a spin system approaching equilibrium with the environment at high-temperature. Although, in quantum information theory, the concept of amplitude damping channel originates in quantum noise processes, amplitude damping channels are experimentally achievable~\cite{salles_experimental_2008,qing_linear_2007,almeida_environment-induced_2007}.

Given an initial qubit $\rho$, the action of a one-qubit quantum operation on it can be given by $\Phi(\rho )=\sum _{i} E_i\rho  E_i^\dagger$, where $E_i$ are the Kraus operators. The one-qubit amplitude damping channel has the Kraus form $E_0=\left(\begin{array}{cc}1 & 0 \\ 0 & \sqrt{q} \\\end{array}\right)$, $E_1=\left(\begin{array}{cc}1 & \sqrt{p} \\ 0 & 0 \\\end{array}\right)$, with $q=1-p$, $q\in[0, 1]$. The parameter $p$ is responsible for a wide range of physical phenomena. There exist some time-dependent decoherence models, where $p$ is replaced by a time-varying function $1-e^{-\Gamma  t}$ with $\Gamma$ a constant characterizing the speed of the processes.

For simplicity, we only consider the symmetric situation in which the damping rates on both sides are equal. In addition, we restrict the resource states for RSP to Bell-diagonal states, which are diagonal in the Bell basis ($\vert 00\rangle\pm\vert 11\rangle$, $\vert 01\rangle\pm\vert 10\rangle$). A Bell-diagonal state $\rho$ can also be expressed by Pauli matrices as
\begin{align}\label{eqbelldiag}
\rho =\frac{1}{4}\left(I+\sum _{i=1}^3 c_i\sigma _i\otimes \sigma _i\right).
\end{align}
The condition for $\rho$ being a proper density matrix imposes that the point made of coordinates $\left(c_1, c_2, c_3\right)$ must lie inside the tetrahedron with the four Bell states being the corners (see Fig.~\ref{fig2}(a))~\cite{horodecki_information-theoretic_1996}. The Bell-diagonal states may arise in a wide variety of physical situations and play an important role in quantum information processings~\cite{verstraete_optimal_2003,bennett_purification_1996,wang_experimental_2006,koashi_quantum_1996}. Moreover, there always exists a local transformation that can transform a given mixed state to the corresponding Bell-diagonal form~\cite{cen_local_2002}.

Given an initial two-qubit state $\rho$, its evolution under a amplitude damping channel can be modeled in the Kraus form
\begin{align*}
\Phi _p (\rho )=\sum _{i,j=0}^1 E_i\otimes E_j\rho  E_i^\dagger\otimes E_j^\dagger.
\end{align*}
We can calculate the output of an arbitrary Bell-diagonal state with the form of Eq.~(\ref{eqbelldiag}) under a symmetric amplitude damping channel with parameter $p$, and arrive at
\begin{align}\label{equnderch}
\begin{split}
\Phi _p (\rho )={}&\frac{1}{4}\left[I+p \left(I\otimes \sigma _3+\sigma _3\otimes I\right)+ q c_1\sigma _1\otimes \sigma _1\right.\\
&\left.+q c_2\sigma _2\otimes \sigma _2+ \left(c_3 q^2+p^2\right)\sigma _3\otimes \sigma _3\right].
\end{split}
\end{align}
Calculate both the RSP-fidelity and the GMQD for $\Phi _p (\rho )$, we have
\begin{align}
\begin{split}
F_{\text{RSP}}\left(\Phi _p (\rho )\right)=\frac{1}{2}\left[q^2\left(c_1^2+c_2^2\right)+\left(c_3 q^2+p^2\right)^2\right.\\
\left.-\max \left\{\left(q c_1\right)^2,\left(q c_2\right)^2,\left( c_3 q^2+p^2 \right)^2 \right\}\right],
\end{split}
\label{eqfunderch}
\end{align}
\begin{align}
\begin{split}
D_\text{G}&\left(\Phi _p (\rho )\right)=\frac{1}{2}\left[q^2\left(c_1^2+c_2^2\right)+\left(p^2 + c_3 q^2\right)^2+ p^2\right.\\
&\left.-\max \left\{\left(q c_1\right)^2,\left(q c_2\right)^2,\left(p^2 + c_3 q^2\right)^2 + p^2\right\}\right].
\end{split}
\label{eqdgunderch}
\end{align}

\begin{figure}
	\includegraphics[width=0.47\textwidth]{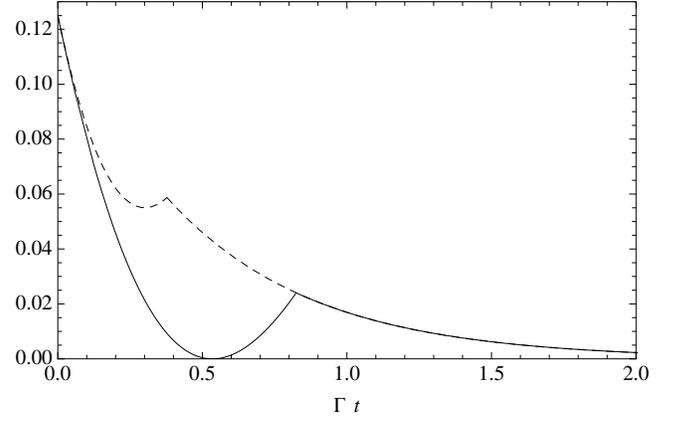}
	\caption{Dynamics of GMQD (dashed line) and RSP-fidelity (solid line) as a function of $\Gamma \text{\textit{$ $}}t$. The initial state used here is given by Eq.~(\ref{eqbelldiag}) with $c_1=0.5$, $c_2=0$ and $c_3=-0.5$.}
	\label{fig1}
\end{figure}

In Fig.~\ref{fig1}, we plot both the RSP-fidelity and the GMQD of $\Phi _p (\rho )$ as functions of scaled time $\Gamma  t$. The initial resource state used is given by Eq.~(\ref{eqbelldiag}) with $c_1=0.5$,  $c_2=0$ and $c_3=-0.5$. In this case, although the resource state is initially Bell-diagonal, but it cannot maintain the Bell-diagonal form as long as it's evolution begins. This is why the curves of $D_\text{G}$ and $F_{\text{RSP}}$ do not always match. We can observe that $D_\text{G}\geq F_{\text{RSP}}$ is always satisfied, which coincides with the result in Sec.~\ref{seciib}. And just as we expected, we find both RSP-fidelity and GMQD can increase during the time evolution under certain conditions. Also, due to the maximization procedures in Eqs.~(\ref{eqfunderch}) and~(\ref{eqdgunderch}), both RSP-fidelity and GMQD have a \textit{sudden change} effect during the time evolution. The phenomena of sudden change in the dynamics of both GMQD and quantum discord have been discussed in detail in Ref.~\cite{maziero_classical_2009,lu_geometric_2010,yao_geometric_2012}. Yet most interestingly, RSP-fidelity vanishes at instant. This is quite different from discord and may show one different facet of RSP-fidelity from GMQD. As we know, when two qubits subject to independent Markovian decoherence, the evolution of discord  decays only in asymptotic time~\cite{modi_classical-quantum_2012,werlang_robustness_2009}. In this sense, compared with quantum algorithms based only on quantum discord, the enhancement of RSP-fidelity is more worthy of consideration. But it should be pointed out, in the situation of non-Markovian, quantum discord can vanishes, see~\cite{fanchini_non-markovian_2010,alipour_quantum_2012}.

\subsection{The enhancibility and the optimal parameter}
\label{seciiic}
In the above subsection, we have considered the symmetric local amplitude damping channel as decoherence noise, which lands us in a passive position. However since the amplitude damping channel is the quantum operation which we can experimentally control, we can impose it on the resource state for enhancing as our aim. In this subsection, we will give the criterion for testing the enhancibility of a Bell-diagonal state under symmetric local amplitude damping channel, as well as the optimal parameter of such channel. 

For this purpose, we need to thoroughly discuss the expression of $F_{\text{RSP}}$. Since the maximization procedure in Eq.~(\ref{eqfunderch}) complicates the calculation, we divide the discussion into two cases, depending on the sign of $c_3$.

\begin{figure*}
	\centering
	\subfloat[]{\includegraphics[width=0.4\textwidth]{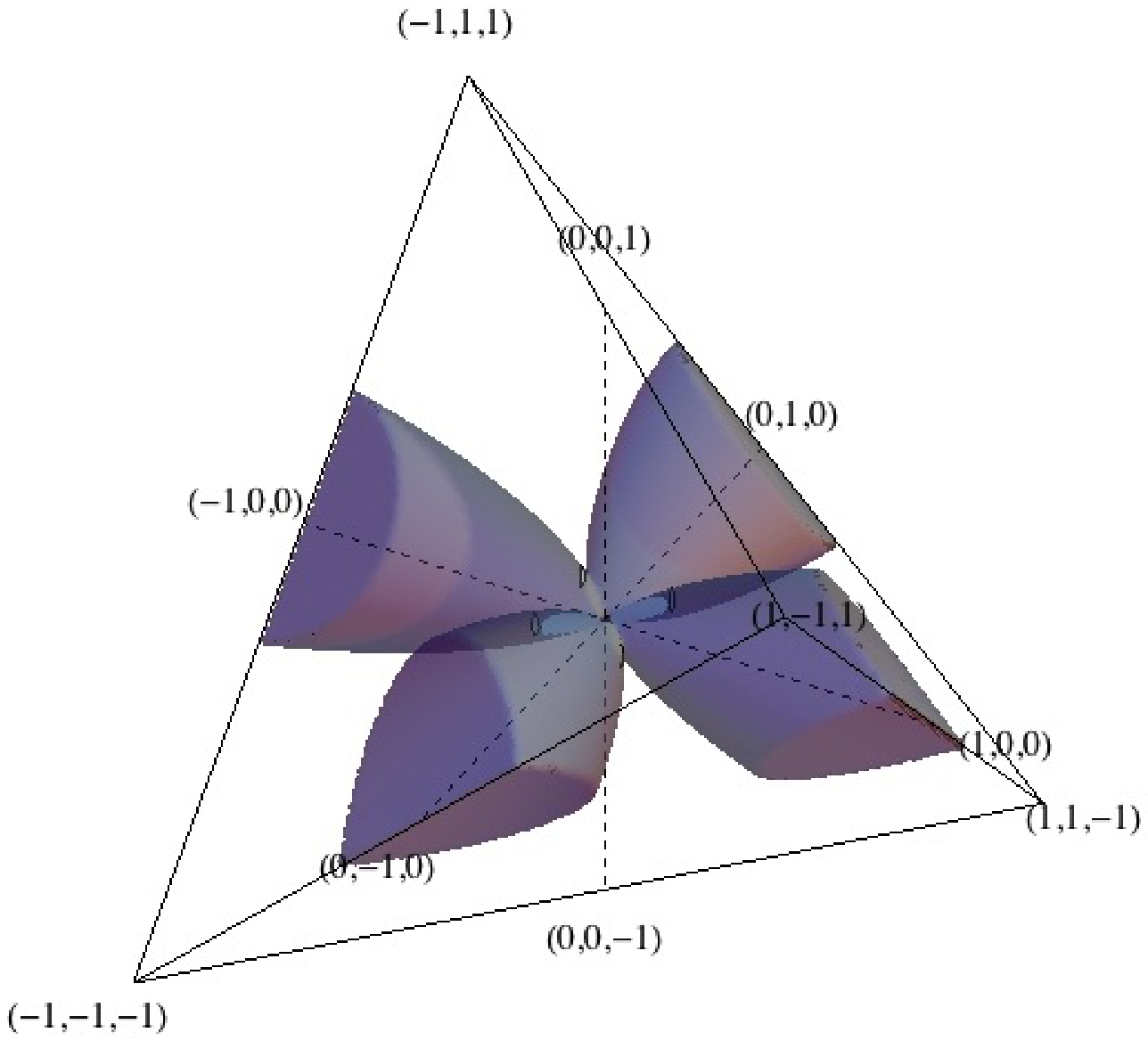}}
	\subfloat[]{\includegraphics[width=0.4\textwidth]{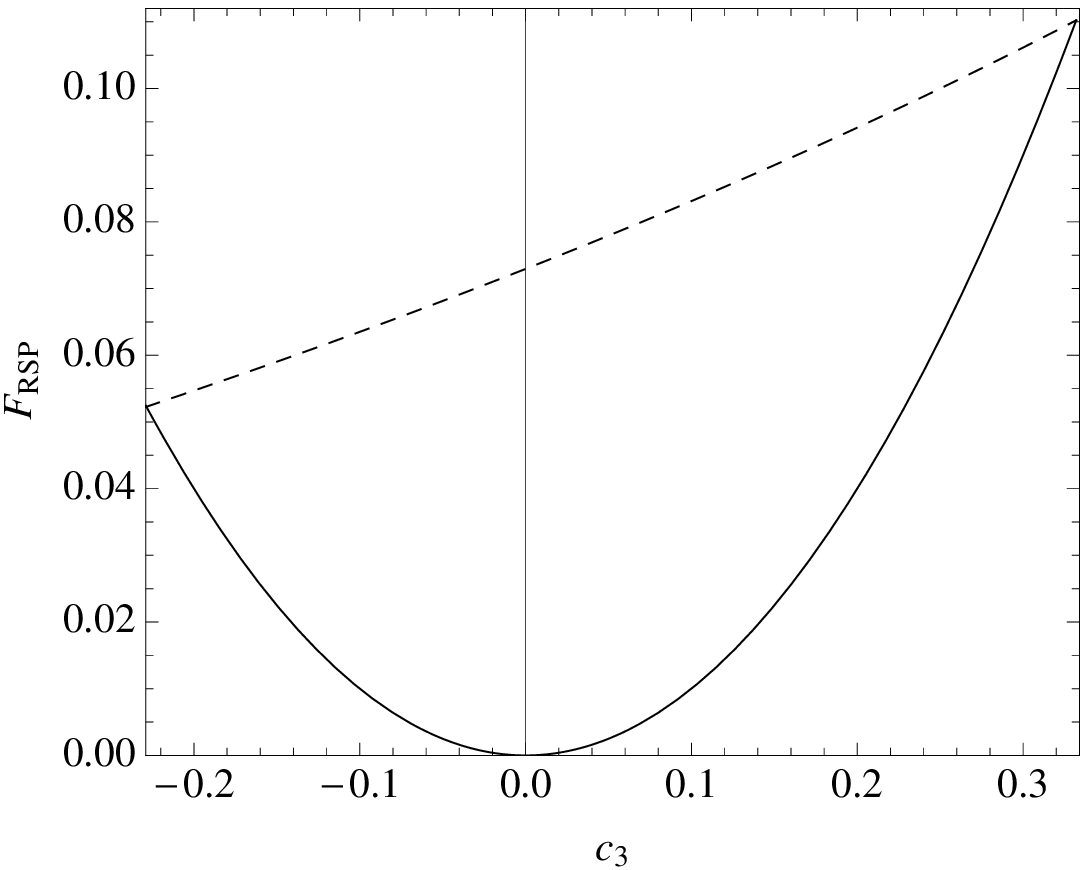}}	
	\caption{(a) The tetrahedron represents the set of Bell-diagonal states where the four Bell states are the corners. The zero-discord states are labeled by the dashed lines. The dark region consisting four part with the same shape represents the states which satisfy condition (11). (b) The RSP-fidelity of RSP resource states (initially Bell-diagonal) before (solid line), and after (dashed line) applying the optimal symmetric local amplitude damping channel.  We have set, in this figure, $c_2=-1$, $c_3=c_1$.}
	\label{fig2}
\end{figure*}

\subsubsection{Case: $c_3\geq 0$}
We set $c=\max \left\{\left|c_1\right|, \left|c_2\right|\right\}$. To get rid of the maximum operator in Eq.~(\ref{eqfunderch}), we need to note the sign of $\left|p^2+c_3q^2\right|-|q c|$. When $c_3\leq c$, by observing the spatial relation between the straight line segment $y=q c$, $q\in [0,1]$ and the curve segment $y=p^2+c_3q^2$, $q\in [0,1]$, we can easily express $F_{\text{RSP}}\left(\Phi _p (\rho )\right)$ from Eq.~(\ref{eqfunderch}) in piecewise form,
\begin{equation}
\begin{split}&F_{\text{RSP}}\left(\Phi _p (\rho )\right)\\={}&\begin{cases}
 \frac{1}{2}\left[q ^2\left(c_1^2+c_2^2-c^2\right)+\left(c_3 q^2+p^2\right)^2\right] & q_1\leq q\leq 1, \\
 \frac{1}{2}q ^2\left(c_1^2+c_2^2\right) & 0\leq q<q_1.
\end{cases}\end{split}
\label{eqfpiecewise}
\end{equation}
Here
\begin{equation}
q_1=\frac{2}{2+c+\sqrt{c^2+4\left(c-c_3\right)}}
\end{equation}
 is the the smaller root of the equation $q c=c_3 q^2+(1-q)^2$. Subsequently, we treat $q$, instead of $p$, as the independent variable of $F_{\text{RSP}}\left(\Phi _q (\rho )\right)$. For $q\in \left[q_1,1\right]$,
\begin{equation}
\begin{split}\frac{\text{d}}{\text{d} q}F_{\text{RSP}}&\left(\Phi _q (\rho )\right)=\left(c_1^2+c_2^2-c^2\right)q\\&+\left[(1-q)^2+c_3q^2\right]\left[2\left(c_3+1\right)q-2\right].
\end{split}
\label{eqdf}
\end{equation}
We can see from Eq.~(\ref{eqfpiecewise}) that local maximums of $F_{\text{RSP}}\left(\Phi _q (\rho )\right)$ will not occur for $q\in \left[0,q_1\right)\cup \left[\frac{1}{1+c_3},1\right]$. While for $q\in \left[q_1,\frac{1}{1+c_3}\right)$, through an analysis of Eq.~(\ref{eqdf}), we observe that $q=q_1$ is the only possible local maximum point of $F_{\text{RSP}}\left(\Phi _q (\rho )\right)$. By inserting $q=q_1$ into Eq.~(\ref{eqfpiecewise}), the existence of a symmetric local amplitude damping channel of parametrer $q_1$ for which $F_{\text{RSP}}\left(\Phi _{q_1} (\rho )\right)>F_{\text{RSP}}(\rho )$ can be checked by
\begin{equation}\frac{c_2^2+c_1^2}{c_1^2+c_2^2+c_3^2-c^2}>\frac{\left(2+c+\sqrt{c^2+4\left(c-c_3\right)}\right)^2}{4}.
\label{eqtest}
\end{equation}

In the above discussion, we considered the situation of $c_3\leq c$. When $c<c_3$, $F_{\text{RSP}}\left(\Phi _q (\rho )\right)<F_{\text{RSP}}(\rho )$ holds for all value of $q\in [0,1)$.

\subsubsection{Case: $c_3<0$}
When $-c_3\leq c$, the straight line segment $y=q c$, $q\in [0,1]$ and the curve segment $y=\left|p^2+c_3q^2\right|$, $q\in [0,1]$ intersect only once at $q=q_1$. Thus by Eq.~(\ref{eqfunderch}), the expressions of $F_{\text{RSP}}\left(\Phi _q (\rho )\right)$ and $\frac{\text{d}}{\text{d} q}F_{\text{RSP}}\left(\Phi _q (\rho )\right)$ in this case are the same with Case 1. And for $q\in \left[q_1,1\right]$, a short analysis of Eq.~(\ref{eqdf}) shows that $q=q_1$ is the only possible local maximum point of $F_{\text{RSP}}\left(\Phi _q (\rho )\right)$. Hence we arrived at the same fomula as Eq.~(\ref{eqtest}) in Case 1 for checking the existence of a symmetric local amplitude damping channel for enhancing $\rho$,
\begin{equation*}
\frac{c_2^2+c_1^2}{c_1^2+c_2^2+c_3^2-c^2}>\frac{\left(2+c+\sqrt{c^2+4\left(c-c_3\right)}\right)^2}{4}.
\end{equation*}

Then, when $c<-c_3$, $F_{\text{RSP}}\left(\Phi _q (\rho )\right)<F_{\text{RSP}}(\rho )$ holds for all value of $q\in [0,1)$.
\subsubsection{Summary}
In the above, we have discussed the enhancibility of a Bell-diagonal state for RSP in two different cases. As we can see, for both cases, we come to the same conclusion that if the condition Eq.~(\ref{eqtest}) holds, there exists a optimal symmetric local amplitude damping channel with parametric
\begin{equation}
p_{\text{opt}}=\frac{c+\sqrt{c^2+4\left(c-c_3\right)}}{2+c+\sqrt{c^2+4\left(c-c_3\right)}},
\label{eqpopt}
\end{equation} use which we can enhance the corresponding Bell-diagonal state.

In Fig.~\ref{fig2}(a), we have plotted the region that represents the Bell-diagoal states satisfying the condition of Eq.~(\ref{eqtest}). In Fig.~\ref{fig2}(b), we depict the increment of RSP-fidelity of a one-dimensional class of Bell-diagonal states before and after the action of the optimal symmetric local amplitude damping channel. As we can see, even the classical state with coordinates $\left(-1, 0, 0\right)$ can be used for RSP (with non-zero RSP-fidelity) after a proper local operation.

\section{Conclusions}\label{seciv}
In this paper, we study the influence of local operations on the RSP-fidelity. We find that RSP-fidelity does not increase for any unital local operation, while the resource sates may be enhanced by suitable nonunital local operations. We use the symmetric local amplitude damping channel acting on Bell-diagonal state to exemplify that nonunital channel under specific condition can enhance the fidelity of RSP. We also use the same example to show that RSP-fidelity can vanish at instant under local Markovian decoherence, which is different from GMQD. At last, we give the optimal parameter of symmetric local amplitude damping channel for enhancing Bell-diagonal resource states [Eq.~(\ref{eqpopt})], and the criterion for the enhancibility of Bell-diagonal states [Eq.~(\ref{eqtest})].

We hope that our research will help to promote the future investigations on the power of local operation for enhancing quantum information protocols. Also we look forward to seeing more works of revealing the essence of increasing quantum correlation by local operation.

\section{Acknowledgments}
This research is supported by the NNSF of China, Grants No. 11075138.


\begin{thebibliography}{99}
\expandafter\ifx\csname natexlab\endcsname\relax\def\natexlab#1{#1}\fi
\expandafter\ifx\csname bibnamefont\endcsname\relax
  \def\bibnamefont#1{#1}\fi
\expandafter\ifx\csname bibfnamefont\endcsname\relax
  \def\bibfnamefont#1{#1}\fi
\expandafter\ifx\csname citenamefont\endcsname\relax
  \def\citenamefont#1{#1}\fi
\expandafter\ifx\csname url\endcsname\relax
  \def\url#1{\texttt{#1}}\fi
\expandafter\ifx\csname urlprefix\endcsname\relax\def\urlprefix{URL }\fi
\providecommand{\bibinfo}[2]{#2}
\providecommand{\eprint}[2][]{\url{#2}}

\bibitem[{\citenamefont{Lo}(2000)}]{lo_classical-communication_2000}
\bibinfo{author}{\bibfnamefont{H.-K.} \bibnamefont{Lo}},
  \bibinfo{journal}{Phys. Rev. A} \textbf{\bibinfo{volume}{62}},
  \bibinfo{pages}{012313} (\bibinfo{year}{2000}).

\bibitem[{\citenamefont{Pati}(2000)}]{pati_minimum_2000}
\bibinfo{author}{\bibfnamefont{A.~K.} \bibnamefont{Pati}},
  \bibinfo{journal}{Phys. Rev. A} \textbf{\bibinfo{volume}{63}},
  \bibinfo{pages}{014302} (\bibinfo{year}{2000}).

\bibitem[{\citenamefont{Peng et~al.}(2003)\citenamefont{Peng, Zhu, Fang, Feng,
  Liu, and Gao}}]{peng_experimental_2003}
\bibinfo{author}{\bibfnamefont{X.~H.} \bibnamefont{Peng}},
  \bibinfo{author}{\bibfnamefont{X.~W.} \bibnamefont{Zhu}},
  \bibinfo{author}{\bibfnamefont{X.~M.} \bibnamefont{Fang}},
  \bibinfo{author}{\bibfnamefont{M.}~\bibnamefont{Feng}},
  \bibinfo{author}{\bibfnamefont{M.~L.} \bibnamefont{Liu}}, \bibnamefont{and}
  \bibinfo{author}{\bibfnamefont{K.~L.} \bibnamefont{Gao}},
  \bibinfo{journal}{Phys. Lett. A} \textbf{\bibinfo{volume}{306}},
  \bibinfo{pages}{271} (\bibinfo{year}{2003}).

\bibitem[{\citenamefont{Dakic et~al.}(2012)\citenamefont{Dakic, Lipp, Ma,
  Ringbauer, Kropatschek, Barz, Paterek, Vedral, Zeilinger, Brukner
  et~al.}}]{dakic_quantum_2012}
\bibinfo{author}{\bibfnamefont{B.}~\bibnamefont{Dakic}},
  \bibinfo{author}{\bibfnamefont{Y.~O.} \bibnamefont{Lipp}},
  \bibinfo{author}{\bibfnamefont{X.}~\bibnamefont{Ma}},
  \bibinfo{author}{\bibfnamefont{M.}~\bibnamefont{Ringbauer}},
  \bibinfo{author}{\bibfnamefont{S.}~\bibnamefont{Kropatschek}},
  \bibinfo{author}{\bibfnamefont{S.}~\bibnamefont{Barz}},
  \bibinfo{author}{\bibfnamefont{T.}~\bibnamefont{Paterek}},
  \bibinfo{author}{\bibfnamefont{V.}~\bibnamefont{Vedral}},
  \bibinfo{author}{\bibfnamefont{A.}~\bibnamefont{Zeilinger}},
  \bibinfo{author}{\bibfnamefont{C.}~\bibnamefont{Brukner}},
  \bibnamefont{et~al.}, \bibinfo{journal}{Nat. Phys.}
  \textbf{\bibinfo{volume}{8}}, \bibinfo{pages}{666} (\bibinfo{year}{2012}).

\bibitem[{\citenamefont{Ollivier and Zurek}(2001)}]{ollivier_quantum_2001}
\bibinfo{author}{\bibfnamefont{H.}~\bibnamefont{Ollivier}} \bibnamefont{and}
  \bibinfo{author}{\bibfnamefont{W.~H.} \bibnamefont{Zurek}},
  \bibinfo{journal}{Phys. Rev. Lett.} \textbf{\bibinfo{volume}{88}},
  \bibinfo{pages}{017901} (\bibinfo{year}{2001}).

\bibitem[{\citenamefont{Daki{\'c} et~al.}(2010)\citenamefont{Daki{\'c}, Vedral,
  and Brukner}}]{dakic_necessary_2010}
\bibinfo{author}{\bibfnamefont{B.}~\bibnamefont{Daki{\'c}}},
  \bibinfo{author}{\bibfnamefont{V.}~\bibnamefont{Vedral}}, \bibnamefont{and}
  \bibinfo{author}{\bibfnamefont{{\v C}.}~\bibnamefont{Brukner}},
  \bibinfo{journal}{Phys. Rev. Lett.} \textbf{\bibinfo{volume}{105}},
  \bibinfo{pages}{190502} (\bibinfo{year}{2010}).

\bibitem[{\citenamefont{Streltsov et~al.}(2011)\citenamefont{Streltsov,
  Kampermann, and Bru{\ss}}}]{streltsov_behavior_2011}
\bibinfo{author}{\bibfnamefont{A.}~\bibnamefont{Streltsov}},
  \bibinfo{author}{\bibfnamefont{H.}~\bibnamefont{Kampermann}},
  \bibnamefont{and} \bibinfo{author}{\bibfnamefont{D.}~\bibnamefont{Bru{\ss}}},
  \bibinfo{journal}{Phys. Rev. Lett.} \textbf{\bibinfo{volume}{107}},
  \bibinfo{pages}{170502} (\bibinfo{year}{2011}).

\bibitem[{\citenamefont{Hu et~al.}(2012)\citenamefont{Hu, Fan, Zhou, and
  Liu}}]{hu_necessary_2012}
\bibinfo{author}{\bibfnamefont{X.}~\bibnamefont{Hu}},
  \bibinfo{author}{\bibfnamefont{H.}~\bibnamefont{Fan}},
  \bibinfo{author}{\bibfnamefont{D.~L.} \bibnamefont{Zhou}}, \bibnamefont{and}
  \bibinfo{author}{\bibfnamefont{W.-M.} \bibnamefont{Liu}},
  \bibinfo{journal}{Phys. Rev. A} \textbf{\bibinfo{volume}{85}},
  \bibinfo{pages}{032102} (\bibinfo{year}{2012}).

\bibitem[{\citenamefont{Piani}(2012)}]{piani_problem_2012}
\bibinfo{author}{\bibfnamefont{M.}~\bibnamefont{Piani}},
  \bibinfo{journal}{Phys. Rev. A} \textbf{\bibinfo{volume}{86}},
  \bibinfo{pages}{034101} (\bibinfo{year}{2012}).

\bibitem[{\citenamefont{Jozsa}(1994)}]{jozsa_fidelity_1994}
\bibinfo{author}{\bibfnamefont{R.}~\bibnamefont{Jozsa}}, \bibinfo{journal}{J.
  Mod. Opt.} \textbf{\bibinfo{volume}{41}}, \bibinfo{pages}{2315}
  (\bibinfo{year}{1994}).

\bibitem[{\citenamefont{Yu and Eberly}(2007)}]{yu_evolution_2007}
\bibinfo{author}{\bibfnamefont{T.}~\bibnamefont{Yu}} \bibnamefont{and}
  \bibinfo{author}{\bibfnamefont{J.~H.} \bibnamefont{Eberly}},
  \bibinfo{journal}{Quantum Inform. Comput.} \textbf{\bibinfo{volume}{7}},
  \bibinfo{pages}{459} (\bibinfo{year}{2007}), ISSN \bibinfo{issn}{1533-7146}.

\bibitem[{\citenamefont{Horodecki and
  Horodecki}(1996)}]{horodecki_information-theoretic_1996}
\bibinfo{author}{\bibfnamefont{R.}~\bibnamefont{Horodecki}} \bibnamefont{and}
  \bibinfo{author}{\bibfnamefont{M.}~\bibnamefont{Horodecki}},
  \bibinfo{journal}{Phys. Rev. A} \textbf{\bibinfo{volume}{54}},
  \bibinfo{pages}{1838} (\bibinfo{year}{1996}).

\bibitem[{\citenamefont{King and Ruskai}(2001)}]{king_minimal_2001}
\bibinfo{author}{\bibfnamefont{C.}~\bibnamefont{King}} \bibnamefont{and}
  \bibinfo{author}{\bibfnamefont{M.~B.} \bibnamefont{Ruskai}},
  \bibinfo{journal}{{IEEE} Trans. Inf. Theory} \textbf{\bibinfo{volume}{47}},
  \bibinfo{pages}{192} (\bibinfo{year}{2001}).

\bibitem[{\citenamefont{Salles et~al.}(2008)\citenamefont{Salles, de~Melo,
  Almeida, Hor-Meyll, Walborn, Souto~Ribeiro, and
  Davidovich}}]{salles_experimental_2008}
\bibinfo{author}{\bibfnamefont{A.}~\bibnamefont{Salles}},
  \bibinfo{author}{\bibfnamefont{F.}~\bibnamefont{de~Melo}},
  \bibinfo{author}{\bibfnamefont{M.~P.} \bibnamefont{Almeida}},
  \bibinfo{author}{\bibfnamefont{M.}~\bibnamefont{Hor-Meyll}},
  \bibinfo{author}{\bibfnamefont{S.~P.} \bibnamefont{Walborn}},
  \bibinfo{author}{\bibfnamefont{P.~H.} \bibnamefont{Souto~Ribeiro}},
  \bibnamefont{and}
  \bibinfo{author}{\bibfnamefont{L.}~\bibnamefont{Davidovich}},
  \bibinfo{journal}{Phys. Rev. A} \textbf{\bibinfo{volume}{78}},
  \bibinfo{pages}{022322} (\bibinfo{year}{2008}).

\bibitem[{\citenamefont{Qing et~al.}(2007)\citenamefont{Qing, Jian, and
  Guang-Can}}]{qing_linear_2007}
\bibinfo{author}{\bibfnamefont{L.}~\bibnamefont{Qing}},
  \bibinfo{author}{\bibfnamefont{L.}~\bibnamefont{Jian}}, \bibnamefont{and}
  \bibinfo{author}{\bibfnamefont{G.}~\bibnamefont{Guang-Can}},
  \bibinfo{journal}{Chinese Phys. Lett.} \textbf{\bibinfo{volume}{24}},
  \bibinfo{pages}{1809} (\bibinfo{year}{2007}).

\bibitem[{\citenamefont{Almeida et~al.}(2007)\citenamefont{Almeida, Melo,
  Hor-Meyll, Salles, Walborn, Ribeiro, and
  Davidovich}}]{almeida_environment-induced_2007}
\bibinfo{author}{\bibfnamefont{M.~P.} \bibnamefont{Almeida}},
  \bibinfo{author}{\bibfnamefont{F.~d.} \bibnamefont{Melo}},
  \bibinfo{author}{\bibfnamefont{M.}~\bibnamefont{Hor-Meyll}},
  \bibinfo{author}{\bibfnamefont{A.}~\bibnamefont{Salles}},
  \bibinfo{author}{\bibfnamefont{S.~P.} \bibnamefont{Walborn}},
  \bibinfo{author}{\bibfnamefont{P.~H.~S.} \bibnamefont{Ribeiro}},
  \bibnamefont{and}
  \bibinfo{author}{\bibfnamefont{L.}~\bibnamefont{Davidovich}},
  \bibinfo{journal}{Science} \textbf{\bibinfo{volume}{316}},
  \bibinfo{pages}{579} (\bibinfo{year}{2007}).

\bibitem[{\citenamefont{Verstraete and
  Verschelde}(2003)}]{verstraete_optimal_2003}
\bibinfo{author}{\bibfnamefont{F.}~\bibnamefont{Verstraete}} \bibnamefont{and}
  \bibinfo{author}{\bibfnamefont{H.}~\bibnamefont{Verschelde}},
  \bibinfo{journal}{Phys. Rev. Lett.} \textbf{\bibinfo{volume}{90}},
  \bibinfo{pages}{097901} (\bibinfo{year}{2003}).

\bibitem[{\citenamefont{Bennett et~al.}(1996)\citenamefont{Bennett, Brassard,
  Popescu, Schumacher, Smolin, and Wootters}}]{bennett_purification_1996}
\bibinfo{author}{\bibfnamefont{C.~H.} \bibnamefont{Bennett}},
  \bibinfo{author}{\bibfnamefont{G.}~\bibnamefont{Brassard}},
  \bibinfo{author}{\bibfnamefont{S.}~\bibnamefont{Popescu}},
  \bibinfo{author}{\bibfnamefont{B.}~\bibnamefont{Schumacher}},
  \bibinfo{author}{\bibfnamefont{J.~A.} \bibnamefont{Smolin}},
  \bibnamefont{and} \bibinfo{author}{\bibfnamefont{W.~K.}
  \bibnamefont{Wootters}}, \bibinfo{journal}{Phys. Rev. Lett.}
  \textbf{\bibinfo{volume}{76}}, \bibinfo{pages}{722} (\bibinfo{year}{1996}).

\bibitem[{\citenamefont{Wang et~al.}(2006)\citenamefont{Wang, Zhou, Huang,
  Zhang, Ren, and Guo}}]{wang_experimental_2006}
\bibinfo{author}{\bibfnamefont{Z.-W.} \bibnamefont{Wang}},
  \bibinfo{author}{\bibfnamefont{X.-F.} \bibnamefont{Zhou}},
  \bibinfo{author}{\bibfnamefont{Y.-F.} \bibnamefont{Huang}},
  \bibinfo{author}{\bibfnamefont{Y.-S.} \bibnamefont{Zhang}},
  \bibinfo{author}{\bibfnamefont{X.-F.} \bibnamefont{Ren}}, \bibnamefont{and}
  \bibinfo{author}{\bibfnamefont{G.-C.} \bibnamefont{Guo}},
  \bibinfo{journal}{Phys. Rev. Lett.} \textbf{\bibinfo{volume}{96}},
  \bibinfo{pages}{220505} (\bibinfo{year}{2006}).

\bibitem[{\citenamefont{Koashi and Imoto}(1996)}]{koashi_quantum_1996}
\bibinfo{author}{\bibfnamefont{M.}~\bibnamefont{Koashi}} \bibnamefont{and}
  \bibinfo{author}{\bibfnamefont{N.}~\bibnamefont{Imoto}},
  \bibinfo{journal}{Phys. Rev. Lett.} \textbf{\bibinfo{volume}{77}},
  \bibinfo{pages}{2137} (\bibinfo{year}{1996}).

\bibitem[{\citenamefont{Cen et~al.}(2002)\citenamefont{Cen, Wu, Yang, and
  An}}]{cen_local_2002}
\bibinfo{author}{\bibfnamefont{L.-X.} \bibnamefont{Cen}},
  \bibinfo{author}{\bibfnamefont{N.-J.} \bibnamefont{Wu}},
  \bibinfo{author}{\bibfnamefont{F.-H.} \bibnamefont{Yang}}, \bibnamefont{and}
  \bibinfo{author}{\bibfnamefont{J.-H.} \bibnamefont{An}},
  \bibinfo{journal}{Phys. Rev. A} \textbf{\bibinfo{volume}{65}},
  \bibinfo{pages}{052318} (\bibinfo{year}{2002}).

\bibitem[{\citenamefont{Maziero et~al.}(2009)\citenamefont{Maziero, C{\'e}leri,
  Serra, and Vedral}}]{maziero_classical_2009}
\bibinfo{author}{\bibfnamefont{J.}~\bibnamefont{Maziero}},
  \bibinfo{author}{\bibfnamefont{L.~C.} \bibnamefont{C{\'e}leri}},
  \bibinfo{author}{\bibfnamefont{R.~M.} \bibnamefont{Serra}}, \bibnamefont{and}
  \bibinfo{author}{\bibfnamefont{V.}~\bibnamefont{Vedral}},
  \bibinfo{journal}{Phys. Rev. A} \textbf{\bibinfo{volume}{80}},
  \bibinfo{pages}{044102} (\bibinfo{year}{2009}).

\bibitem[{\citenamefont{Lu et~al.}(2010)\citenamefont{Lu, Xi, Sun, and
  Wang}}]{lu_geometric_2010}
\bibinfo{author}{\bibfnamefont{X.-M.} \bibnamefont{Lu}},
  \bibinfo{author}{\bibfnamefont{Z.}~\bibnamefont{Xi}},
  \bibinfo{author}{\bibfnamefont{Z.}~\bibnamefont{Sun}}, \bibnamefont{and}
  \bibinfo{author}{\bibfnamefont{X.}~\bibnamefont{Wang}},
  \bibinfo{journal}{Quantum Inform. Comput.} \textbf{\bibinfo{volume}{10}},
  \bibinfo{pages}{994} (\bibinfo{year}{2010}), ISSN \bibinfo{issn}{1533-7146}.

\bibitem[{\citenamefont{Yao et~al.}(2012)\citenamefont{Yao, Li, Yin, and
  Han}}]{yao_geometric_2012}
\bibinfo{author}{\bibfnamefont{Y.}~\bibnamefont{Yao}},
  \bibinfo{author}{\bibfnamefont{H.-W.} \bibnamefont{Li}},
  \bibinfo{author}{\bibfnamefont{Z.-Q.} \bibnamefont{Yin}}, \bibnamefont{and}
  \bibinfo{author}{\bibfnamefont{Z.-F.} \bibnamefont{Han}},
  \bibinfo{journal}{Phys. Lett. A} \textbf{\bibinfo{volume}{376}},
  \bibinfo{pages}{358} (\bibinfo{year}{2012}).

\bibitem[{\citenamefont{Modi et~al.}(2012)\citenamefont{Modi, Brodutch, Cable,
  Paterek, and Vedral}}]{modi_classical-quantum_2012}
\bibinfo{author}{\bibfnamefont{K.}~\bibnamefont{Modi}},
  \bibinfo{author}{\bibfnamefont{A.}~\bibnamefont{Brodutch}},
  \bibinfo{author}{\bibfnamefont{H.}~\bibnamefont{Cable}},
  \bibinfo{author}{\bibfnamefont{T.}~\bibnamefont{Paterek}}, \bibnamefont{and}
  \bibinfo{author}{\bibfnamefont{V.}~\bibnamefont{Vedral}},
  \bibinfo{journal}{Rev. Mod. Phys.} \textbf{\bibinfo{volume}{84}},
  \bibinfo{pages}{1655} (\bibinfo{year}{2012}).

\bibitem[{\citenamefont{Werlang et~al.}(2009)\citenamefont{Werlang, Souza,
  Fanchini, and Villas~Boas}}]{werlang_robustness_2009}
\bibinfo{author}{\bibfnamefont{T.}~\bibnamefont{Werlang}},
  \bibinfo{author}{\bibfnamefont{S.}~\bibnamefont{Souza}},
  \bibinfo{author}{\bibfnamefont{F.~F.} \bibnamefont{Fanchini}},
  \bibnamefont{and} \bibinfo{author}{\bibfnamefont{C.~J.}
  \bibnamefont{Villas~Boas}}, \bibinfo{journal}{Phys. Rev. A}
  \textbf{\bibinfo{volume}{80}}, \bibinfo{pages}{024103}
  (\bibinfo{year}{2009}).

\bibitem[{\citenamefont{Fanchini et~al.}(2010)\citenamefont{Fanchini, Werlang,
  Brasil, Arruda, and Caldeira}}]{fanchini_non-markovian_2010}
\bibinfo{author}{\bibfnamefont{F.~F.} \bibnamefont{Fanchini}},
  \bibinfo{author}{\bibfnamefont{T.}~\bibnamefont{Werlang}},
  \bibinfo{author}{\bibfnamefont{C.~A.} \bibnamefont{Brasil}},
  \bibinfo{author}{\bibfnamefont{L.~G.~E.} \bibnamefont{Arruda}},
  \bibnamefont{and} \bibinfo{author}{\bibfnamefont{A.~O.}
  \bibnamefont{Caldeira}}, \bibinfo{journal}{Phys. Rev. A}
  \textbf{\bibinfo{volume}{81}}, \bibinfo{pages}{052107}
  (\bibinfo{year}{2010}).

\bibitem[{\citenamefont{Alipour et~al.}(2012)\citenamefont{Alipour, Mani, and
  Rezakhani}}]{alipour_quantum_2012}
\bibinfo{author}{\bibfnamefont{S.}~\bibnamefont{Alipour}},
  \bibinfo{author}{\bibfnamefont{A.}~\bibnamefont{Mani}}, \bibnamefont{and}
  \bibinfo{author}{\bibfnamefont{A.~T.} \bibnamefont{Rezakhani}},
  \bibinfo{journal}{Phys. Rev. A} \textbf{\bibinfo{volume}{85}},
  \bibinfo{pages}{052108} (\bibinfo{year}{2012}).

\end{thebibliography}
\end{document}